\newtheorem{theorem}{Theorem}
\newtheorem{lemma}[theorem]{Lemma}
\newtheorem*{definition*}{Definition}
\newcommand{\R}{\mathbb{R}}
\newcommand{\C}{\mathbb{C}}
\newcommand{\F}{\mathbb{F}_p}
\newcommand{\Fn}{{\mathbb{F}_p^n}}
\newcommand{\Prf}[2]{\Pr_{f\in_R F_{#1}} \left[ {#2} \right] }
\newcommand{\PrfD}[1]{\Prf{D}{#1} }
\newcommand{\midpoint}{M} 
\newcommand{\spa}{\mbox{\rm span}}
\newcommand{\dom}{\mbox{\rm dom}}
\renewcommand{\i}{\mathbf i}
\newcommand{\eps}{\varepsilon}
\newcommand{\nrm}[1]{\left\lVert#1\right\rVert}
\def\01{\{0,1\}}
\title{Simon's problem for linear functions}
\author{Joran van Apeldoorn\thanks{QuSoft, CWI, the Netherlands. Both authors are supported by the Netherlands Organization for Scientific Research, grant number 617.001.351. The first author is also partially supported by QuantERA, project QuantAlgo 680-91-034. {\tt \{apeldoor,gribling\}@cwi.nl}}
  \and
  Sander Gribling\footnotemark[1]
}
\date{}
\begin{document}
\maketitle

\begin{abstract}
  Simon's problem asks the following: determine if a function $f: \{0,1\}^n \rightarrow \{0,1\}^n$ is one-to-one or if there exists a unique $s \in \{0,1\}^n$ such that $f(x) = f(x \oplus s)$ for all $x \in \{0,1\}^n$, given the promise that exactly one of the two holds. A classical algorithm that can solve this problem for every $f$ requires~$2^{\Omega(n)}$ queries to $f$. Simon~\cite{simon:power} showed that there is a quantum algorithm that can solve this promise problem for every $f$ using only $\mathcal O(n)$ quantum queries to $f$. A matching lower bound on the number of quantum queries was given in~\cite{knp:simonJ}, even for functions $f: \Fn \to \Fn$. We give a short proof that $\mathcal O(n)$ quantum queries is optimal even when we are additionally promised that $f$ is linear. This is somewhat surprising because for linear functions there even exists a \emph{classical} $n$-query algorithm.
\end{abstract}

\section{Introduction}

In 1994, Simon~\cite{simon:power} showed the existence of a query problem where quantum algorithms offer an exponential improvement over the best randomized classical algorithms that have a bounded error probability of, say, at most 1/3. The problem he considers is the following:

\medskip

\noindent \emph{Given a function $f: \{0,1\}^n \rightarrow \{0,1\}^n$ with the promise that it either (1) is one-to-one or (2) admits a unique $s \in \{0,1\}^n$ such that $f(x) = f(x \oplus s)$ for all $x \in \{0,1\}^n$, decide which of the two holds.}

\medskip

\noindent Simon showed that there is a quantum algorithm which can solve this promise problem for any $f$ using $\mathcal O(n)$ quantum queries to $f$, i.e., using $\mathcal O(n)$ applications of the unitary $|x\rangle |b\rangle \mapsto |x\rangle|b\oplus f(x)\rangle$.\footnote{In fact, Simon considered the problem of finding the non-zero string $s$, if it exists. Here we focus on the decision version of his problem. However, all upper bounds mentioned are derived from algorithms which also find $s$.} This offers an exponential improvement over classical algorithms, since Simon also showed that at least $2^{\Omega(n)}$ classical queries of the form $x \mapsto f(x)$ are needed in order to succeed with probability at least $2/3$. The question we are interested in is the optimality of Simon's quantum algorithm and its generalization to finite fields. Let $p$ be a prime power and let $\F$ be the finite field with $p$ elements. Simon's problem over $\F$ can be formulated as follows:

\medskip

\noindent \emph{Given a function $f: \F^n \rightarrow \F^n$ with the promise that it either (1) is one-to-one or (2) admits a one-dimensional subspace $H \subset \F^n$ such that for all $x,y \in \F^n$, $f(x) = f(y) \Leftrightarrow x-y \in H$, decide which of the two holds.}

\medskip

Koiran et al.~\cite{knp:simonJ} (for an earlier version see~\cite{knp:simon}) showed that the quantum query complexity of Simon's problem over $\F$ is $\Theta(n)$.\footnote{They even prove the analogous lower bound for the hidden subgroup problem over Abelian groups, see Section~\ref{sec:concl}.} Here we show that the lower bound of $\Omega(n)$ quantum queries holds even when $f$ is additionally promised to be linear. That is, a quantum algorithm which can solve Simon's problem over~$\F$ for any linear function requires $\Omega(n)$ quantum queries to $f$. Interestingly, this shows that for the class of linear functions there is no quantum advantage: classically, one can also fully determine a linear function using $n$ queries, by querying a basis.
\begin{definition*}[Linear Simon's problem]
  Given a linear function $f: \Fn \to \Fn$, with the promise that either $|\ker(f)| = 1$ or $|\ker(f)|=p$, decide which of the two holds.
\end{definition*}
Our main result (proved in Section~\ref{sec:proofs}) is the following.
\begin{theorem} \label{thm:mainLB}
  Let $\mathcal A$ be a $T$-query quantum algorithm for the Linear Simon's problem with success probability at least $2/3$. Then $T = \Omega(n)$.
\end{theorem}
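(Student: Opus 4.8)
The plan is to exploit the linear structure by passing to a phase oracle and tracking how the acceptance probability depends, through its Fourier expansion, on the matrix defining $f$.

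First I would identify the linear function $f$ with its matrix $A\in\F^{n\times n}$, so that $f(x)=Ax$ and the query unitary is $O_A:\ket{x}\ket{b}\mapsto\ket{x}\ket{b+Ax}$. Applying the quantum Fourier transform over $\Fn$ to the second register (an $A$-independent unitary, which we fold into the algorithm's own operations) turns $O_A$ into the phase oracle $\ket{x}\ket{y}\mapsto\chi(y^\top A x)\ket{x}\ket{y}$, where $\chi$ is a fixed nontrivial additive character of $\F$ and $\langle A,M\rangle:=\sum_{i,j}A_{ij}M_{ij}$ satisfies $y^\top A x=\langle A,yx^\top\rangle$. Writing the final state as an alternating product of fixed unitaries and phase oracles and expanding over query paths $(x_1,y_1),\dots,(x_T,y_T)$, every amplitude becomes a linear combination of characters $A\mapsto\chi(\langle A,M\rangle)$ with $M=\sum_{t=1}^{T}y_t x_t^\top$, so $\rank M\le T$. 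Consequently the acceptance probability $P(A)=\nrm{\Pi_{\mathrm{acc}}\ket{\psi_T^A}}^2$, being a sum of products of such amplitudes with their conjugates, is a linear combination of characters $\chi(\langle A,M\rangle)$ with $\rank M\le 2T$. This is the crux: a $T$-query algorithm produces an acceptance probability whose Fourier support lives entirely on matrices of rank at most $2T$.

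Next I would turn the $\Omega(n)$ bound into a statement about this Fourier support. Success probability $\ge 2/3$ forces $\mathbb{E}_{A\text{ invertible}}[P(A)]-\mathbb{E}_{\rank A=n-1}[P(A)]\ge 1/3$. The oracle is symmetric under $G=\mathrm{GL}_n(\F)\times\mathrm{GL}_n(\F)$ acting by $A\mapsto RAS^{-1}$: one checks $O_{RAS^{-1}}=W_2\,O_A\,W_1$ for $A$-independent $W_1,W_2$ built from the coordinate maps $\ket{x}\mapsto\ket{S^{-1}x}$ and $\ket{b}\mapsto\ket{Rb}$. Since each rank class is a single $G$-orbit, the uniform distributions on the invertible and on the rank-$(n-1)$ matrices are $G$-invariant, so the Fourier coefficients $E_k(r):=\mathbb{E}_{\rank A=r}[\chi(\langle A,M\rangle)]$ (for any fixed $M$ of rank $k$) depend only on $k=\rank M$. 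Writing $\Delta_k:=E_k(n)-E_k(n-1)$ and $s_k:=\sum_{\rank M=k}\hat P(M)$, the distinguishing advantage equals $\sum_{k=1}^{2T}\Delta_k s_k$; these are exactly the eigenvalue data of the association scheme of bilinear forms over $\F$, for which $\Delta_k$ is given by an explicit eigenvalue formula (a $q$-Krawtchouk expression with $q=p$).

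The hard part is the final quantitative estimate: showing that $\sum_{k\le 2T}\Delta_k s_k\ge 1/3$ is impossible unless $2T=\Omega(n)$. A naive Parseval/Cauchy--Schwarz bound does \emph{not} suffice — already a single rank-one character separates the two classes by a constant (indeed $E_1(n)=-1/(p^n-1)$ via the uniformity of the first column of a random invertible matrix), so bounding $s_k$ only through $\sum_M|\hat P(M)|^2\le 1$ leaves the $k=1$ term of constant size. The estimate must instead use the constraint $0\le P\le 1$ together with the positive-semidefiniteness of the scheme's primitive idempotents, i.e.\ a Delsarte-type linear-programming argument on the eigenvalues $\Delta_k$, to conclude that any function of Fourier rank at most $2T$ separating the two top rank classes by a constant must have $2T=\Omega(n)$. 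I expect this eigenvalue/LP estimate — controlling how the quantities $\Delta_k$ trade off against the counts $N_k=\#\{\rank M=k\}$ under the $[0,1]$ constraint — to be the main obstacle, whereas the low-Fourier-rank lemma of the first step is the conceptual heart that makes the linear case tractable and the proof short.
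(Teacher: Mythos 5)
Your first step is sound and genuinely different from the paper's: passing to the phase oracle and expanding over query paths correctly shows that $P(A)$ is supported on characters $\chi(\langle A,M\rangle)$ with $\rank M\le 2T$, and the $\mathrm{GL}_n(\F)\times\mathrm{GL}_n(\F)$ symmetry correctly reduces the class averages to the eigenvalue data $E_k(r)$ of the bilinear-forms association scheme. This is a legitimate analogue of the paper's observation that $P(f)=\sum_{s\in S_{2T}}\beta_s[s\preceq f]$ with $|\dom(s)|\le 2T$. But the proof has a genuine gap, and you name it yourself: the entire quantitative content of the theorem is the claim that no $[0,1]$-valued function with Fourier support on rank $\le 2T$ can separate the rank-$n$ class from the rank-$(n-1)$ class by $1/3$ unless $T=\Omega(n)$, and you leave this as ``the main obstacle'' rather than proving it. You correctly observe that Parseval alone cannot work (individual low-rank characters already have non-negligible $\Delta_k$, and $s_k$ is an $\ell_1$-type quantity that Parseval does not control), but the proposed fix --- a Delsarte-type LP on the scheme's eigenvalues using the primitive idempotents and the $0\le P\le 1$ constraint --- is only a plan. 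Without an actual bound on the $\Delta_k$ (explicit $q$-Krawtchouk asymptotics) and an actual dual-feasible solution to the LP, the theorem is not established.

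It is worth seeing how the paper sidesteps exactly this obstacle. Rather than comparing only the two promised rank classes, it averages the acceptance probability over \emph{each} of the $n+1$ classes $F_D$, $D=p^k$, and proves (Lemma~\ref{lem:upperbound}) that $Q_s(D)=\PrfD{s\preceq f}$ is a polynomial of degree at most $\dim(\spa(\dom(s)))\le 2T$ in the single variable $D$; the quantitative step is then a univariate root-counting/mean-value argument (Lemma~\ref{lem:deglb}) that uses the boundedness $Q(p^k)\in[0,1]$ at all $n+1$ points, not just the two promised ones. In your language, what is really needed is the fact that each class-averaged character $E_k(r)$ is itself a polynomial of degree at most $k$ in $p^{n-r}=|\ker|$, after which Lemma~\ref{lem:deglb} finishes the argument --- but proving that polynomiality is essentially the content of the paper's Lemma~\ref{lem:upperbound}, which your proposal does not supply. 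So either route (the LP estimate or the polynomiality of the eigenvalues) still requires the missing lemma; as it stands, the conceptual reduction is correct but the theorem is not proved.
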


We follow the same proof structure as~\cite{knp:simonJ}, using the polynomial method~\cite{bbcmw:polynomialsj}. More specifically, we show that, averaged over a subset of functions, the acceptance probability of a $T$-query quantum algorithm is a polynomial of degree at most $2T$ in the size of the kernel. We then obtain the lower bound by appealing to~\cite[Lemma~5]{knp:simonJ} which states that any polynomial with the correct success probabilities has degree $\Omega(n)$.
However, where~\cite{knp:simonJ} average over all functions, we only consider linear functions over $\F^n$. Surprisingly this simplifies the proof substantially. We also give a slightly simplified proof of~\cite[Lemma~5]{knp:simonJ}.

\paragraph{Notation}

For a set $K \subseteq \F^n$ we call $s:K \rightarrow \F^n$ a \emph{partial} function and we say that $f:\F^n \rightarrow \F^n$ \emph{extends} $s$ if $f(x) = s(x)$ for all $x \in K$. We write $s \preceq f$ if $f$ extends $s$. Let $S_k$ be the set of all partial functions defined on a domain of size at most $k$. Let $\deg_x(f)$ be the degree of $f$ as a polynomial in the variable $x$.
We define $F=\{f : \Fn \rightarrow \Fn \ | \ f \text{ linear} \}$ as the set of all linear functions from $\Fn$ to $\Fn$. For each $k \in \{0,1,\ldots,n\}$ and $D = p^k$ we let $F_D$ be the subset of $F$ consisting of linear functions whose kernel has size $D$, i.e., $F_D=\{f \in F \mid |\ker(f)| = D  \}$.
Finally, we use $\i^2 = -1$ and we use square brackets $[\cdot]: \{\mathrm{true,false}\} \to \{0,1\}$ to denote the function that maps true to $1$ and false to $0$.

\section{Proof of Theorem~\ref{thm:mainLB}} \label{sec:proofs}
The proof of Theorem~\ref{thm:mainLB} is based on a well-known method of lower bounding the quantum query complexity of a Boolean function $G:\{0,1\}^n \to \{0,1\}$: the polynomial method introduced by Beals et al.~\cite{bbcmw:polynomialsj}. Let us first sketch the polynomial method in the setting of their paper.
A $T$-query quantum algorithm $\mathcal A$ for computing $G(x)$ (for every $x \in \{0,1\}^n$) can be described by a Hilbert space $\C^n \otimes \C^2 \otimes \C^m$, a sequence of $T$ unitary matrices $U_0,\ldots, U_T$ acting on the space, and an oracle $O_x$ that is defined as
\[
O_x: |i\rangle |b \rangle |w\rangle \mapsto |i \rangle |b \oplus x_i \rangle |w\rangle.
\]
The definition of the oracle explains the tensor product structure of the Hilbert space $\C^n \otimes \C^2 \otimes \C^m$: the first part corresponds with a query input, the second with a query output, and the last with extra work space.
The quantum algorithm then works as follows. It starts in a fixed state, say $|0 \rangle |0 \rangle |0\rangle$, and then alternates between applying the unitaries and queries before deciding on its output via a measurement to the second register of the final state. Concretely, the state of the algorithm before the final measurement is as follows:
\[
U_T O_x U_{T-1} O_x \cdots O_x U_1 O_x U_0 |0 \rangle |0 \rangle |0 \rangle =: \sum_{(i,b,w) \in [n] \times  \{0,1\} \times [m] } \alpha_{i,b,w}(x) |i\rangle |b \rangle |w \rangle
\]
where $\alpha_{i,b,w}(x) \in \C$. The crucial observation is that the amplitudes $\alpha_{i,b,w}(x)$ of the final state are  polynomials in the input variables $x_i$ of degree at most $T$. Indeed, applying the oracle to, e.g., a state $\alpha |i \rangle |0 \rangle |w\rangle + \beta |i \rangle |1\rangle |w\rangle$ leads to the state
\[
\big((1-x_i) \alpha + x_i  \beta\big) |i\rangle |0\rangle |w\rangle + \big(x_i \alpha + (1-x_i) \beta\big) |i\rangle |1\rangle |w\rangle.
\]
This shows that applying the oracle once increases the degree by at most 1. Since the unitaries do not depend on $x$ and are linear transformations, they do not increase the degree. Instead of viewing the amplitudes as polynomials in the variables $x_i$, it will be more convenient to think of them as homogeneous (degree $T$) polynomials in the Kronecker delta variables $\delta_{x_i,1}:= x_i$ and $\delta_{x_i,0} := (1-x_i)$. The probability of measuring a $1$ in the second register of the final state, i.e., the acceptance probability $P(x)$, is then given by the sum of the squared amplitudes of states with a $1$ in the second register:
\[
P(x) = \sum_{i \in [n],w \in [m]} |\alpha_{i,1,w}(x)|^2 = \sum_{\substack{s\subseteq [n]\times \{0,1\}\\|s| \leq  2T}} \beta_s \prod_{(i,b)\in s} \delta_{x_i,b}
\]
where the real numbers $\beta_s$ are the coefficients of the monomials $\prod_{(i,b)\in s} \delta_{x_i,b}$ in $P(x)$. If $\mathcal A$ computes $G$ with high success probability, then $P(x)$ will be close to $G(x)$ for every $x \in \{0,1\}^n$ which may be used to prove a degree lower bound on $P(x)$.  However, proving lower bounds on the degree of $P(x)$ directly is often complicated. A common technique is to average $P(x)$ over multiple inputs in order to reduce the problem to studying a univariate polynomial. For example, for a symmetric\footnote{A Boolean function $G$ is symmetric if $G(x)$ only depends on the Hamming weight $|x|$ of $x$.} function $G: \{0,1\}^n \rightarrow \{0,1\}$ averaging $P(x)$ over all permutations of $n$ elements reduces the problem to studying univariate polynomials $q(|x|)$ which approximate $G(x)$ (for which tight degree bounds are known)~\cite{bbcmw:polynomialsj}.

\medskip

The above version of the polynomial method is easily generalized to inputs that are not Boolean (see, e.g.,~\cite{aaronson&shi:collision}). We will do so here for the setting corresponding to the Linear Simon's problem.

Let $\mathcal A$ be a $T$-query algorithm for the Linear Simon's problem and let $P(f)$ be the acceptance probability of $\mathcal A$ on the input $f$. As before, we can write
\[
P(f) = \sum_{\substack{s\subseteq\Fn\times\Fn \\|s| \leq 2T}} \beta_s \prod_{(x,y)\in s} \delta_{f(x),y}.
\]
When we view $s$ as a partial function, this expression can be rewritten in terms of $f$ extending $s$:
\[
P(f) = \sum_{s\in S_{2T}} \beta_s [ s \preceq f ],
\]
where $S_{2T}$ is the set of all partial functions $s$ with $|\dom(s)|\leq 2T$. As above, it will turn out to be useful to average $P(f)$ over all linear functions $f$ with a kernel of size $D$, i.e., we consider the average acceptance probability $Q(D)$ over all functions with a kernel of size $D$:
\[
Q(D) = \sum_{f\in F_D} \frac{1}{|F_D|} P(f) = \sum_{f\in F_D} \frac{1}{|F_D|}\sum_{s\in S} \beta_s [ s \preceq f ] = \sum_{s\in S} \beta_s \frac{1}{|F_D|} \sum_{f\in F_D}  [ s \preceq f ] = \sum_{s\in S} \beta_s Q_s(D).
\]
Here $Q_s(D)$ is the probability that a uniformly random $f \in F_D$ extends $s$:
\[
Q_s(D) = \frac{1}{|F_D|} \sum_{f\in F_D}  [ s \preceq f ] = \PrfD{s\preceq f}
\]
In the next two sections we will prove that the degree of $Q$ needs to be at least linear in $n$, and that the degree of each $Q_s$ (and hence of $Q$) is upper bounded by $2T$. Together these results implies Theorem~\ref{thm:mainLB}.

\subsection{Lower bound on the degree}
For $k \in \{0,1,\dots,n\}$, $Q(p^k)$ represents an acceptance probability and therefore $Q(p^k) \in [0,1]$. Moreover, if the algorithm succeeds with probability at least $2/3$, then $Q(1) \geq 2/3$ and $Q(p) \leq 1/3$. The lemma below shows that such a $Q$ has degree $\Omega(n)$. We give a slightly simplified proof for completeness.

\begin{lemma}[{\cite[Lemma~5]{knp:simonJ}}] \label{lem:deglb}
  For every polynomial $Q$ such that $Q(1)\geq 2/3$, $Q(p) \leq 1/3$ and $Q(p^k) \in [0,1]$ for all $k\in \{0,\dots,n\}$, it holds that $\deg(Q) \geq n/4$.
\end{lemma}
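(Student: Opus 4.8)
The plan is to prove the contrapositive in the form of a derivative bound fed into Markov's inequality. First I would rescale the variable affinely so that the outermost nodes $1$ and $p^n$ are sent to $-1$ and $1$, writing $\hat Q$ for the resulting polynomial (of the same degree $d := \deg(Q)$). Under this map the node $1$ goes to $-1$ and the node $p$ goes to $u_1 = -1 + \tfrac{2(p-1)}{p^n-1}$, so the two points carrying the ``jump'' are separated by only $\Theta(p^{-n})$, while $\hat Q$ still decreases from $\ge 2/3$ to $\le 1/3$ across them. By the mean value theorem there is an $\eta\in(-1,u_1)\subset[-1,1]$ with $|\hat Q'(\eta)| \ge \tfrac{p^n-1}{6(p-1)} = \Omega(p^{\,n})$. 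If one now has a bound $M := \max_{[-1,1]}|\hat Q|$, then the Markov brothers' inequality $\max_{[-1,1]}|\hat Q'|\le d^2 M$ yields $d^2 \ge \Omega(p^{\,n})/M$, and the whole problem is reduced to controlling $M$ in terms of the pointwise data.

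Before carrying this out I would record \emph{why} the far constraints must enter, since this dictates where the factor $n$ comes from. A degree-one $Q$ realizing the jump has slope $\approx -\tfrac{1}{3(p-1)}$ and hence already leaves $[0,1]$ by $x=p^2$; more generally a low-degree polynomial that drops across $[1,p]$ cannot remain in $[0,1]$ at the geometrically spaced points out to $p^n$ without oscillating. Thus the constraints $Q(p^k)\in[0,1]$ for large $k$ are exactly what force the degree up, and it is their \emph{number} that should be converted into the bound $d=\Omega(n)$. This also tells me that any argument using only the nodes near the jump is doomed: the information at the distant nodes is indispensable.

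The main obstacle is precisely the transfer from the pointwise bounds $\hat Q(u_k)\in[0,1]$ to a bound on the whole interval. The naive route, writing $\hat Q$ as the Lagrange interpolant through all $n+1$ nodes and bounding it by the Lebesgue function, is far too lossy here: because the images $u_k$ of the geometric nodes cluster near $-1$, that bound is super-exponential in $n$ and overwhelms the $\Omega(p^{\,n})$ gain. Two ways around this seem promising, and I expect one of them to be the technical heart of the proof. The first is to reparametrize by $t=\log_p x$, turning the nodes into the equispaced points $0,\ln p,\dots,n\ln p$ and $Q$ into an exponential sum with frequencies $0,\dots,d$, and then invoke a Markov--Bernstein-type inequality tailored to equispaced nodes (in the spirit of the Coppersmith--Rivlin/Ehlich--Zeller and Nayak--Wu estimates) to bound the interval supremum by a quantity small enough to leave the $\Omega(p^{\,n})$ factor intact. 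The second is a multiscale variant, applying Markov's inequality locally on successive geometric scales $[p^{k},p^{k+1}]$ and accumulating one ``unit'' of degree per scale, which would produce the linear dependence directly. Making either of these rigorous — in particular pinning down the correct node-adapted Markov inequality and checking that the constants yield the clean threshold $d\ge n/4$ — is the step I expect to require the most care.
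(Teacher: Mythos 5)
Your reduction to ``bound $M=\max_{[-1,1]}|\hat Q|$ and apply Markov'' leaves the essential step open, and as you yourself note, neither of your two proposed routes is actually carried out. This is a genuine gap, not a presentational one. Route (1) has a structural problem: after the substitution $t=\log_p x$ the function $t\mapsto Q(p^t)$ is an exponential sum with $d+1$ terms, \emph{not} a polynomial of degree $d$ in $t$, so the Coppersmith--Rivlin/Ehlich--Zeller estimates for polynomials bounded at equispaced points do not apply; the Tur\'an/Nazarov-type substitutes for exponential sums require control on a set of positive measure rather than at $n+1$ points. Route (2) (one unit of degree per geometric scale) is closer in spirit to something that works, but ``applying Markov locally and accumulating degree'' is a heuristic rather than an argument: Markov's inequality on $[p^k,p^{k+1}]$ needs a sup-norm bound on that whole subinterval, which again you only have at its two endpoints. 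So the proposal correctly identifies the central difficulty but does not overcome it.

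The paper's proof sidesteps the sup-norm issue entirely. Assuming $d\le n/2$, it uses a pigeonhole over the $2d-2$ intervals $(p^a,p^{a+1})$, $a=n-(2d-2),\dots,n-1$, to find one interval containing no root (real part) of $Q'$ or $Q''$. On that interval $Q$ and $Q'$ are monotone, so the two endpoint constraints $Q(p^a),Q(p^{a+1})\in[0,1]$ alone force $|Q'(M)|\le(\tfrac{p-1}{2}p^a)^{-1}$ at the midpoint $M$; no uniform bound on $Q$ is ever needed. Separately, factoring $Q'$ over its complex roots and bounding each factor shows $|Q'(M)/Q'(x_0)|\ge p^{-(2d-2)}$, where $x_0\in[1,p]$ is the mean-value point with $|Q'(x_0)|\ge\frac{1}{3(p-1)}$ coming from the jump. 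Comparing the two bounds gives $p^{n-4d+4}\le 6$ and hence $d\ge n/4$. If you want to salvage your plan, the lesson is to compare the derivative at a far-out, root-free scale directly to the derivative at the jump via the factorization of $Q'$, rather than trying to control $\hat Q$ uniformly on $[-1,1]$.
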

\begin{proof}
  Assume that $Q$ is a polynomial of degree $d\leq n/2$  (otherwise we are done), so that its derivative $Q'$ is of degree $d-1$ and its second derivative $Q''$ is of degree $d-2$. Consider the $2d-2$ intervals of the form $(p^a,p^{a+1})$ where $a = n-(2d-2),\ldots, n-1$. Since together $Q'$ and $Q''$ have at most $2d-3$ roots, there is such an interval for which both polynomials have no roots with real part in it; let $a \geq n-(2d-2)$ be the integer corresponding to this interval and let $M:=\frac{1+p}{2}p^{a}$ be the middle of this interval. By the mean value theorem we know that there is an $x_0 \in [1,p]$ for which $|Q'(x_0)|\geq \frac{1}{3(p-1)}$.
  To show the degree lower bound it suffices to prove the following chain of inequalities:
  \[
  \frac{1}{ p^{2d-2}} \stackrel{(*)}{\leq} \left|\frac{Q'(\midpoint)}{Q'(x_0)} \right| \stackrel{(**)}{\leq} \frac{3(p-1)}{\frac{p-1}{2} p^{n-2d+2}}.
  \]
  Indeed, if the above chain of inequalities holds, then $6 \geq p^{n-4d +4} \geq 2^{n-4d+4}$ which implies that $n-4d+4 \leq 3$, i.e., $d \geq \frac{n+1}{4}$.
  \\$\mathbf{(*)}$ For the lower bound we will use the following elementary fact:
  \begin{equation}\label{eq:simplefact}
    \text{if } 0\leq v < w\text{ and } 0\leq y, \text{ then } \frac{v+y}{w+y} \geq \frac{v}{w}
  \end{equation}
  Denote the roots of $Q'$ by $b_j+c_j \i$, for $j \in [d-1]$. Then $Q'(x) = \lambda \prod_{j=1}^{d-1}(x-b_j-c_j \i)$ for some $\lambda \in \R$ and hence
  \[
  \left|\frac{Q'(\midpoint)}{Q'(x_0)} \right| =  \left| \prod_{j=1}^{d-1} \frac{\midpoint - b_j-c_j \i}{x_0-b_j-c_j \i}  \right| =  \prod_{j=1}^{d-1} \left| \frac{\midpoint - b_j-c_j\i}{x_0-b_j-c_j\i}  \right| = \prod_{j=1}^{d-1} \sqrt{\frac{\left(\midpoint - b_j\right)^2+c_j^2}{\left(x_0-b_j\right)^2 + c_j^2}}
  \]
  We will show that each factor in the product is bounded from below by $1/p^2$.
  Considering the $j$-th factor, if $| x_0 - b_j| \leq |\midpoint - b_j|$ then we are clearly done. Hence, assume $|x_0 - b_j| > |\midpoint - b_j|$, that is, $b_j> \frac{\midpoint - x_0}{2} \geq p^{a-1}$. We use~\eqref{eq:simplefact}:
  \[
  \sqrt{\frac{\left(\midpoint - b_j\right)^2+c_j^2}{\left(x_0-b_j\right)^2 + c_j^2}} \geq \left| \frac{\midpoint - b_j}{x_0-b_j} \right|
  \]
  Since we know that $b_j > p^{a-1}$ and $b_j\not\in (p^a,p^{a+1})$ there are two cases to consider:
  \begin{itemize}
  \item If $b_j \in (p^{a-1},p^a]$, then $\displaystyle{\left| \frac{\midpoint - b_j}{x_0-b_j} \right| \geq \inf_{x\in (p^{a-1},p^a)} \left| \frac{\midpoint - x}{x_0-x} \right| = \left| \frac{\midpoint - p^a}{x_0-p^a} \right| \geq  \frac{1}{2}} \geq \frac{1}{p^2}$

  \item If $b_j \in [p^{a+1},\infty)$, then $\displaystyle{
    \left| \frac{\midpoint - b_j}{x_0-b_j} \right| = \left| \frac{ - \frac{1+p}{2}p^{a} + b_j}{ -x_0+b_j} \right| = \left| \frac{  \frac{p-1}{2}p^a + (b_j - p^{a+1})}{ p^{a+1}-x_0+(b_j-p^{a+1})} \right| \geq \frac{p^{a-1}}{p^{a+1} - x_0} \geq \frac{1}{p^2}}$\\
    where we use~\eqref{eq:simplefact} and $\frac{p-1}{2} \geq \frac{1}{p}$ for the first inequality.
  \end{itemize}

  \noindent $\mathbf{(**)}$ By construction $|Q'(x_0)| \geq \frac{1}{3(p-1)}$, so it remains to show that $|Q'(\midpoint)| \leq (\frac{p-1}{2} p^{n-2d+2})^{-1}$. Assume towards a contradiction that $|Q'(\midpoint)| > (\frac{p-1}{2} p^{a})^{-1}$. Since $Q''$ has no roots with real part in the interval $(p^a,p^{a+1})$, $Q'$ is either strictly increasing or strictly decreasing on the interval $(p^a,p^{a+1})$. Therefore, there is an interval $(\alpha,\beta)$ (with $\alpha,\beta \in \{p^a, \midpoint, p^{a+1}\}$) of length $\frac{p-1}{2} p^a$ where $|Q'(x)| > (\frac{p-1}{2} p^{a})^{-1}$. By the fundamental theorem of calculus this implies that $|Q(\alpha) - Q(\beta)| >1$. This is a contradiction, since we have $1 \geq |Q(p^{a+1}) - Q(p^a)| \geq |Q(\alpha) - Q(\beta)|$, where the last inequality follows by monotonicity of $Q$ on the interval $(p^a,p^{a+1})$. It follows that
  \[
  |Q'(\midpoint)| \leq \left(\frac{p-1}{2} p^{a}\right)^{-1} \leq \left(\frac{p-1}{2} p^{n-2d+2}\right)^{-1}.
  \]
  We conclude that $\displaystyle{\frac{1}{ p^{2d-2}} \leq \left|\frac{Q'(\midpoint)}{Q'(x_0)} \right| \leq \frac{3(p-1)}{\frac{p-1}{2} p^{n-2d+2}}}$ and hence that $d \geq n/4$. \qedhere
\end{proof}

\subsection{Upper bound on the degree}
We now show that the degree of each $Q_s$ is upper bounded by $2T$.
\begin{lemma}\label{lem:upperbound}
Given a partial linear function $s: \dom(s) \rightarrow \Fn$, $\deg_D(Q_s)\leq \dim(\spa(\dom(s)))$.
\end{lemma}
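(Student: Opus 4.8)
The plan is to first reduce the event $s \preceq f$ to a statement about the restriction of $f$ to the subspace $V := \spa(\dom(s))$, and then to push all of the $D$-dependence into the kernel of $f$, which for a uniformly random $f \in F_D$ is simply a uniformly random subspace of dimension $k$ (where $D = p^k$). Since $s$ is a partial \emph{linear} function, it has a unique linear extension $\tilde s : V \to \Fn$, and for linear $f$ we have $s \preceq f$ if and only if $f|_V = \tilde s$. Writing $r := \dim(V)$ and $\rho := \rank(\tilde s)$, the goal becomes to show that $Q_s(D) = \PrfD{f|_V = \tilde s}$ is a polynomial of degree at most $r$ in $D$.

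Next I would isolate the $D$-dependence into a subspace--intersection probability. The key observation is that the kernel $K := \ker(f)$ of a uniformly random $f \in F_D$ is a uniformly random $k$-dimensional subspace of $\Fn$, and that, conditioned on $K$, the induced injection $\Fn/K \hookrightarrow \Fn$ is uniform. Since $f|_V = \tilde s$ forces $\ker(f|_V) = V \cap K$ to equal the fixed subspace $V_0 := \ker(\tilde s)$ (of dimension $r-\rho$), and conditioned on $V \cap K = V_0$ the map $f|_V$ is a uniformly random injection of the $\rho$-dimensional space $V/V_0$ into $\Fn$, the probability should factor as
\[
Q_s(D) = \frac{1}{\prod_{i=0}^{\rho-1}(p^n - p^i)} \cdot \Pr_{K}\!\left[ V \cap K = V_0 \right],
\]
where $K$ is a uniformly random $k$-dimensional subspace. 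The prefactor is a constant independent of $D$, so it suffices to bound the degree of $\Pr_K[V \cap K = V_0]$.

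The polynomial heart of the argument is the elementary fact that for any \emph{fixed} subspace $U \subseteq \Fn$ the probability $\Pr_K[U \subseteq K]$ is a polynomial in $D = p^k$ of degree exactly $\dim(U)$: indeed $\Pr_K[U \subseteq K] = \binom{n - \dim U}{\,k - \dim U\,}_p / \binom{n}{k}_p$ telescopes, up to a constant factor, to $\prod_{l=0}^{\dim U - 1}(D - p^l)$. To pass from these ``upward'' events to the exact-intersection event I would use M\"obius inversion on the interval $[V_0, V]$ of the subspace lattice: since $\Pr_K[V \cap K \supseteq U] = \Pr_K[U \subseteq K]$ for every $U$ with $V_0 \subseteq U \subseteq V$, M\"obius inversion expresses $\Pr_K[V \cap K = V_0]$ as a fixed linear combination of the polynomials $\Pr_K[U \subseteq K]$ over $U \in [V_0, V]$. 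Each such term has degree $\dim(U) \le \dim(V) = r$, whence $\Pr_K[V \cap K = V_0]$, and therefore $Q_s$, has degree at most $r$ in $D$.

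The step I expect to require the most care is the factorization in the second paragraph: one must verify that conditioning on the kernel really does make $f|_V$ a uniform injection, so that its probability of equalling $\tilde s$ is a constant that does not depend on $K$ as long as $V \cap K = V_0$. A conceptual route is worthwhile here precisely because the naive counting formulas for $|F_D|$ and for $\#\{f \in F_D : f|_V = \tilde s\}$ are products whose \emph{number} of factors grows with $n - k$; such expressions are not polynomials of bounded degree in $D = p^k$ individually, and only their ratio is. Routing everything through the kernel (a uniform subspace) and through the manifestly low-degree quantities $\Pr_K[U \subseteq K]$ sidesteps having to exhibit this cancellation by hand.
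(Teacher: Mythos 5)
Your proof is correct, and while it shares the paper's overall skeleton, the key combinatorial step is carried out by a genuinely different method. Both arguments reduce $s\preceq f$ to the statement that $f$ agrees on $V:=\spa(\dom(s))$ with the unique linear extension $\tilde s$, and both factor the probability as $\Pr[\,V\cap\ker(f)=\ker(\tilde s)\,]$ times a $D$-independent constant; that constant is exactly the paper's part (3), and the uniformity claim you flagged as delicate does hold (conditioned on $\ker(f)=K$, the induced map $\Fn/K\to\Fn$ is a uniform injection, so its restriction to the fixed image of $V$ is too). Where you diverge is in bounding the degree of the exact-intersection probability. The paper writes the event $V\cap\ker(f)=\ker(\tilde s)$ as the conjunction of $Z\subseteq H$ and $Y\cap H=\{0\}$ for an explicit complement $Y$ of $Z:=\ker(\tilde s)$ in $V$, and computes both probabilities in closed form by counting subspaces and bases, obtaining degrees $z$ and $k-z$. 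You instead apply M\"obius inversion on the interval $[\ker(\tilde s),V]$ of the subspace lattice, which expresses the exact-intersection probability as a fixed linear combination of the containment probabilities $\Pr[\,U\subseteq\ker(f)\,]$, each of which is visibly a constant times $\prod_{l=0}^{\dim U-1}(D-p^l)$ and hence of degree $\dim U\leq\dim V$. Your route is more abstract and avoids the explicit conditional-probability computation of the paper's part (2) (it also silently handles the degenerate case $\dim U>\log_p D$, where the same polynomial correctly evaluates to $0$), at the price of invoking the M\"obius function of the subspace lattice; the paper's route is more elementary and yields exact formulas. Both are valid proofs of the lemma.
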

\begin{proof}
  Let $K := \spa(\dom(s))$ and $k:=\dim(K)$. We can extend $s$ uniquely to a linear function on $K$. Define $Z:= \ker(s) \subseteq K$ and $z := \dim(Z)$, and $Y := Z^{\perp} \cap K$. For a function $f:\Fn \rightarrow \Fn$ in $F_D$ we write $H:= \ker(f)$, $h := \dim(H)$ and $D := |H| = p^h$. We show that $\PrfD{s\preceq f}$ has degree at most $k$ as a polynomial in $D$.
We analyze this probability in three parts:
  \begin{align*}
    \PrfD{s\preceq f} &= \PrfD{Z \subseteq H \land Y \cap H = \{ 0 \} } \PrfD{s\preceq f \mid Z \subseteq H \land Y \cap H = \{ 0 \} }\\
    &= \PrfD{Z \subseteq H} \PrfD{Y \cap H = \{ 0 \} \mid Z \subseteq H }  \PrfD{s\preceq f  \mid Z \subseteq H \land Y \cap H = \{ 0 \} }.
  \end{align*}
We show that
\begin{enumerate}
\item[(1)] $\PrfD{Z \subseteq H}$ is a polynomial in $D$ of degree at most $z$,
\item[(2)] $\PrfD{Y \cap H = \{ 0 \} \mid Z \subseteq H }$ is a polynomial in D of degree at most $k-z$,
\item[(3)] $\PrfD{s\preceq f \mid Z \subseteq H \land Y \cap H = \{ 0 \} }$ does not depend on $D$.
\end{enumerate}
Together, this implies that $\PrfD{s\preceq f}$ is a polynomial in $D$ of degree at most $k$.
\begin{enumerate}
\item[(1)] The probability that $Z \subseteq H$ equals the fraction of $h$-dimensional subspaces of $\Fn$ that contain~$Z$. There are $\alpha(n,h) = \prod_{i=0}^{h-1} (p^n - p^i)$ ways to pick $h$ linearly independent vectors in a space of dimension~$n$, and hence there are $\beta(n,h) = \frac{\alpha(n,h)}{\alpha(h,h)}$ different subspaces of dimension $h$ in $\Fn$. The number of $h$-dimensional subspaces that contain $Z$ equals the number of $(h-z)$-dimensional subspaces in an $(n-z)$-dimensional space. Hence
  \[
  \PrfD{Z \subseteq H} = \frac{\beta(n-z,h-z)}{\beta(n,h)} = \prod_{i=0}^{z-1} \frac{p^h - p^i}{p^n-p^i},
  \]
  which is a degree-$z$ polynomial in terms of $D = p^h$.
\item[(2)] We have
$\displaystyle{\PrfD{Y \cap H = \{ 0 \} \mid Z \subseteq H } =   \PrfD{Y/Z \cap H/Z = \{ 0 \}}}$ where $Y/Z$ and $H/Z$ are subspaces of $\Fn/Z \simeq \F^{n-z}$.
By construction we have that $\dim(Y/Z) = \dim(Y) = k-z$, $\dim(H/Z) = h-z$. 
The probability $ \PrfD{Y/Z \cap H/Z = \{ 0 \}}$ equals the number of $(h-z)$-dimensional bases of $\F^{n-z}$ which are linearly independent from $Y$, divided by $\beta(n-z,h-z)$. That is,
\[
\PrfD{Y/Z \cap H/Z = \{ 0 \}} = \frac{\frac{\prod_{i=0}^{h-z-1} p^{n-z}-p^{k-z+i}}{\alpha(h-z,h-z)}}{\frac{\alpha(n-z,h-z)}{\alpha(h-z,h-z)}}
  = \frac{\prod_{i=0}^{h-z-1} p^{n-z}-p^{k-z+i}}{\alpha(n-z,h-z)}  = \frac{\prod_{i=0}^{k-z-1} p^{n-z}-p^{h-z+i}}{\alpha(n-z,k-z)}
\]
where the last equality is obtained using ${\alpha(n-z,h-z) = \alpha(n-z,k-z)  \prod_{i=k-z}^{h-z-1} p^{n-z} - p^i}$.
It follows that $\displaystyle{\PrfD{Y/Z \cap H/Z = \{ 0 \}} = \frac{\prod_{i=0}^{k-z-1} p^{n-z}-p^{h-z+i}}{\alpha(n-z,k-z)}}$ is a polynomial in $D=p^h$ of degree $k-z$. We mention in passing that, alternatively, one can arrive at the same expression by looking at the probability that a random $Y$ is linearly independent from a fixed $H$.

\item[(3)]   Finally we consider $ \PrfD{s\preceq f \mid Z \subseteq H \land Y \cap H = \{ 0 \} } $. Since $Z \subseteq H$, we know that $f$ and $s$ agree on $Z$. Hence, $f$ extends $s$ if their values agree on $Y$. Let $b_1,\dots,b_{k-z}$ be a basis for $Y$, then $f$ and $s$ agree on $Y$ if and only if they agree on $b_1, \ldots, b_{k-z}$.  Since we condition on the event $Y \cap H = \{0\}$, the probability that this happens does not depend on $D = p^h$.
\end{enumerate}
\end{proof}

\section{Open problems} \label{sec:concl}
To conclude, we propose the following open problems:
\begin{itemize}
\item Koiran et al.~\cite{knp:simonJ} lift the lower bound on Simon's problem over $\Fn$ to the hidden subgroup problem over finite Abelian groups:

\medskip

\emph{Given a (finite Abelian) group $G$ and a function $f: G \to X$ with the promise that there is a subgroup $H \leq G$ of rank either $0$ or $1$ (i.e., either trivial, or generated by a single element), such that $f(g) = f(g')$ if and only if $g-g' \in H$, decide which of the two holds.}

\medskip

One recovers Simon's problem over $\Fn$ by taking $G = X = \Fn$. A natural question is whether or not the hidden subgroup problem over finite Abelian groups also remains equally hard when we are additionally promised that $f$ is an endomorphism. The reduction used by Koiran et al.~combined with our result gives a smaller and more structured set of hard instances of the hidden subgroup problem over Abelian groups. However, the functions obtained from this reduction will only be endomorphisms on a subgroup of $G$, not on all of $G$.

\item While the general Simon's problem has no natural extension to $\R^n$, the linear Simon's problem can possibly be extended to $\R^n$. For example: given matrix-vector multiplication queries $x \mapsto Ax$ for a symmetric matrix $A$ with $\nrm{A}\leq 1$, decide if $\lambda_{\min}(A)\leq \eps$ or $\lambda_{\min}(A)\geq 2\eps$. It remains an open question to prove a lower bound on this problem. An $\Omega(n)$ lower bound could have implications for quantum convex optimization. In particular this may resolve an open question posed in recent work~\cite{apeldoor:convexoracles} regarding the number of queries needed to optimize a convex function.

\item Aaronson and Ben-David~\cite{aaronson:sculpt} introduced the idea of \emph{sculpting} functions. They characterized the total Boolean functions for which there is a promise on the input such that restricted to that promise there is an exponential separation between quantum and classical query complexity.
We propose the related idea of \emph{over-sculpting}: bringing the classical query complexity down to the quantum query complexity.
More specifically, for which (possibly partial) Boolean functions $f$ does there exist a promise $P$ such that:
  \[
    Q_{1/3}(f) \leq o(R_{1/3}(f))
  \]
  \[
    Q_{1/3}(f) = \Theta(Q_{1/3}(f|_P)) = \Theta(R_{1/3}(f|_P)).
  \]
  Simon's problem does not correspond to a Boolean function since the input alphabet is not Boolean\footnote{An input for Simons problem is a function $f: \Fn \rightarrow \Fn$, which can be viewed as a string of length $p^n$ over the input alphabet $\Fn$.}, but our results show that Simon's problem can be over-sculpted in this slightly different setting.
\end{itemize}
\paragraph{Acknowledgements} We would like to thank Ronald de Wolf for many helpful comments and discussions. We would also like to thank Andr\'as Gily\'en for useful discussions.

\bibliographystyle{alpha}
\bibliography{qc}
\end{document}